  \providecommand\BibTeX{{%
    \normalfont B\kern-0.5em{\scshape i\kern-0.25em b}\kern-0.8em\TeX}}}
\newcommand{\ignore}[1]{}
\DeclareMathOperator{\rgt}{rgt}
\DeclareMathOperator{\unf}{unf}
\begin{document}

\title{ProportionNet: Balancing Fairness and Revenue for Auction Design with Deep Learning}

\author{Kevin Kuo}
\authornote{Equal contribution}
\email{kkuo1@umd.edu}
\affiliation{University of Maryland}

\author{Anthony Ostuni}
\authornotemark[1]
\email{aostuni@umd.edu}
\affiliation{University of Maryland}

\author{Elizabeth Horishny}
\authornotemark[1]
\email{ehorishny1@pride.hofstra.edu}
\affiliation{Hofstra University}

\author{Michael J. Curry}
\authornotemark[1]
\email{curry@cs.umd.edu}
\affiliation{University of Maryland}

\author{Samuel Dooley}
\authornotemark[1]
\email{sdooley1@cs.umd.edu}
\affiliation{University of Maryland}

\author{Ping-yeh Chiang}
\authornotemark[1]
\email{pchiang@cs.umd.edu}
\affiliation{University of Maryland}

\author{Tom Goldstein}
\email{tomg@cs.umd.edu}
\affiliation{University of Maryland}

\author{John Dickerson}
\email{john@cs.umd.edu}
\affiliation{University of Maryland}

\begin{abstract}
The design of revenue-maximizing auctions with strong incentive guarantees is a core concern of economic theory.  Computational auctions enable online advertising, sourcing, spectrum allocation, and myriad financial markets. Analytic progress in this space is notoriously difficult; since Myerson's 1981 work characterizing single-item ``optimal'' auctions, there has been limited progress outside of restricted settings. A recent paper by D{\"u}tting et al.\ circumvents analytic difficulties by applying deep learning techniques to, instead, approximate optimal auctions. In parallel, new research from Ilvento et al.\ and other groups has developed notions of fairness in the context of auction design. Inspired by these advances, in this paper, we extend techniques for approximating auctions using deep learning to address concerns of fairness while maintaining high revenue and strong incentive guarantees.
\end{abstract}

\maketitle
\pagestyle{plain}

\section{Introduction}
Auctions connect buyers and sellers to enable the exchange of money for goods and services.  Auction theory has a rich history in economics and, more recently, computer science.
Since 1994, the US Federal Communications Commission (FCC) has periodically run multi-billion dollar auctions to allocate electromagnetic spectrum broadcasting licenses requiring immense computational resources \citep{Leyton-Brown17:Economics}.  Technology giants such as Google, Facebook, and Baidu rely heavily on sophisticated auction-based advertising ecosystems to drive the majority of their revenue~\citep{Edelman07:Internet}. Additionally, websites such as eBay and Alibaba's Taobao operate as platforms that connect buyers and sellers, often through auctions.  In aggregate, the contribution to the world economy of computational auctions is measured in the hundreds of billions, if not trillions, of dollars per year~\citep{Alphabet20:Q2,Facebook20:Q1,Baidu20:Q1}. 

The design of auctions is thus quite important. In all cases described, the rules for determining winners and payments from bids are carefully designed to make sure the auctions fulfill desirable properties. This is a major focus of the broader field of mechanism design~\citep{Roughgarden10:Algorithmic}.

In the typical theoretical model for auction mechanisms, players are presumed to have some private valuations of the items up for sale, which are drawn from some publicly-known distribution. The players then place their bids, possibly choosing to strategically lie while trying to anticipate the strategic behavior of others. Typically players are assumed to be rational, so that in this setting they will choose actions from a Bayes-Nash equilibrium, but in reality this equilibrium may be very complex and difficult for the designer and players to determine.

One solution to this problem is to focus on strategyproof, or incentive compatible, auctions. These are auctions where, even though players are free to lie about their private valuations, rational players will simply choose to tell the truth. Subject to this constraint, equilibrium play is simple, and the mechanism designer can focus on ensuring other desirable properties. The classic strategyproof auction is known as the Vickrey-Clarke-Groves (VCG) auction \cite{Vickrey61:Counterspeculation,Clarke71:Multipart,Groves73:Incentives}, which has the additional desirable property of maximizing social welfare (i.e. the total utility enjoyed by all auction participants).

 Auction designers often care about social welfare, but in many cases an auctioneer selling items may instead wish (or have an obligation, as in auctions of spectrum and other goods belonging to the public) to maximize their own revenue, subject to strategyproofness. Myerson's \cite{Myerson1981} groundbreaking work defined the optimal strategyproof auction for selling a single item, but progress has been limited in characterizing strategyproof, revenue-maximizing auctions beyond this setting. While there are some results for selling multiple items to a single bidder \cite{daskalakis2017strong,manelli2006bundling,pavlov2011optimal}, even for selling just two items to two bidders, no results are known.
 
\begin{figure*}[t]
    \centering
    \includegraphics[scale=0.50]{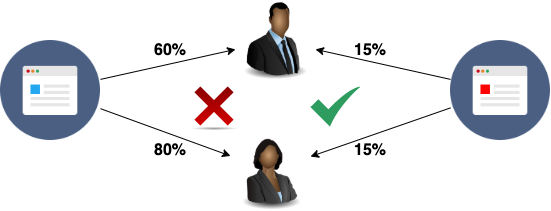}
    \caption{While RegretNet appears successful at determining revenue-maximizing allocations, it is blind to the fairness of such allocations. Thus, there may be a high probability of unequal allocations, even for two items that are equivalent for all meaningful purposes. Concretely, it may allocate an online advertisement for a career opportunity to an equally qualified man and woman at notably different proportions (as on the left). Our network ProportionNet prevents such unfairness; (as on the right) it forces similar advertisement allocation proportions between the two similar individuals.}
    \label{fig:mvpayment}
\end{figure*}

The persisting challenge of designing optimal auctions, and the fact that typical theoretical assumptions involve a probability distribution over valuations, have resulted in attempts to formulate the auction design problem as a machine learning problem. In particular, \citet{dutting2019optimal} use neural networks to represent an auction mechanism (as a function from a vector of bids to outcomes), and define a learning objective to enforce strategyproofness while encouraging revenue maximization. %

In addition to their economic importance, the design of auctions can also have serious social impact. With this in mind, what properties, in addition to strategyproofness, might the designer of a revenue-maximizing auction have reason to enforce? \textbf{A major concern must be fairness with respect to protected characteristics.}

Consider the case of online advertising---one of the most important real-world applications for the theory of mechanism design. When placing advertisements in certain categories (job ads, ads for certain financial services, ads for housing, etc.), companies have a legal obligation to avoid discrimination on the basis of protected characteristics such as race, gender, and national origin. Yet a 2015 study \cite{AutomatedExperimentsonAdPrivacySettings} showed a difference between employment advertisements received by male and female users: male users were shown advertisements promising higher salaries than female users. Furthermore, \cite{Ali_2019} observe that Facebook's preemptive categorization of ad-user relevancy skews ads toward certain genders and racial groups.

A number of papers have considered the mechanism design problem when fairness with respect to protected characteristics is required \cite{ilventoTVfair,chawla2019multi,chawla2020fairness}. Here, whatever the exact details of how fairness is defined, the notion of fairness does not consider the bidders in the auction, but rather the individuals whose impressions are the ``items'' for sale -- thus imposing fairness constraints means imposing constraints on the allocations made by the auction mechanism.

Defining a strategyproof, fair auction that maximizes revenue remains theoretically challenging. 
Recent work in this area includes \citep{Celis19:Toward}, which provides theory and algorithms for finding optimal  itemwise Myerson auctions under fairness constraints. \citet{Nasr20:Bidding} computes fair strategies from the bidder's point of view.

As with other auctions, we see the use of machine learning as a way out of this impasse -- we aim to extend the techniques of \cite{dutting2019optimal} and others to allow the imposition of fairness constraints on learned mechanisms. Doing so allows insights into the cost to revenue of imposing fairness, and the structure of fair mechanisms in some settings.

\noindent\textbf{Our contributions.}  
\begin{itemize}
    \item We provide a deep-learning-based method for designing approximately fair, strategyproof, revenue-maximizing auctions given access to samples from the valuation distribution. Our approach extends the RegretNet approach \cite{dutting2019optimal} with fairness constraints (and preserves its generalization guarantees): a melding of ideas from the fair ML and economics \& computation communities.
    
    \item This represents a step towards the larger problem of designing revenue-maximizing multi-item auctions under not only strategyproofness but also fairness constraints, potentially motivating and informing future theoretical work.
\end{itemize}

\section{Background}

We first describe the typical formal model of auction design, the challenges of designing good auctions, and the use of deep learning techniques to circumvent these challenges. Then we discuss problems of fairness in auctions and describe one formal definition of fairness that makes sense in an auction setting.

\subsection{Auction Model}
An auction involves a set of agents $N = \{1,\dots,n\}$ bidding for items $M = \{1,\dots,m\}$. Each agent $i\in N$ has a corresponding valuation function $v_i$. These private valuations are presumed to be drawn at random from a publicly known distribution $V_i$ of their possible valuation functions. We denote a profile of the $n$ valuation functions as $v = (v_1,...v_n)$.

Let $v_i(S)$ represent the agent's value for a subset of items $S \subseteq M$. In the most general case $v_i$ is defined for all subsets of $M$; these are known as \textbf{combinatorial} valuations. In practice combinatorial valuations are very difficult to deal with, as each user must report $2^M$ different bids, and this quantity may grow unreasonably large.

One can instead use simpler familes of valuations. With \textbf{additive} valuations, an agent's valuation for a subset of items is $v_i(S) = \sum_{j\in S}v_i(\{j\})$, the sum of the individual items' valuations. With \textbf{unit-demand} valuations, the value of a subset is $v_i(S) = \max_{j\in S}v_i(\{j\})$, the maximum individual valuation within that subset. Both cases reduce the input space of $v_i$ from size $2^M$ to $M$ -- users need only bid on each item. In this work, we will operate with either the additive or unit-demand assumption, but \textit{not} with combinatorial valuations.

Given their private valuations, each agent reports a bid vector $b_i$ to the auctioneer. Note that $b_{ij}$, agent $i$'s bid on the $j^{\text{th}}$ item, is not necessarily $v_i(\{j\})$; our auction operates under the assumption that agents are free to report bids which do not represent their true item valuations (although we will try to discourage them from doing so).

Finally, based on the profile of bids $b = (b_1,\dots,b_n)$, the auction determines an outcome using the allocation and payment rules $g(b): \mathbb{R}^{mn} \rightarrow [0,1]^{nm}$ and $p(b): \mathbb{R}^{mn} \rightarrow \mathbb{R}^{n}$. We will refer to the matrix of allocation probabilities as $g(b) = z$, and the allocation probability $g(b)_{i,j}$ of item $j$ to agent $i$ as $z_{i,j}$. Allocation probabilities for each item must sum to 1 (items cannot be overallocated). Additionally for unit-demand auctions, we restrict the allocation to allow each bidder to win, in expectation, at most 1 item.

Given the allocation, each agent receives a utility equal to their true valuation of the items they win, minus their payment. For additive and unit-demand bidders, this can be represented in linear form as $u_i = \sum_j v_{i,j}z_{i,j} - p_i$.

\subsection{Desirable Auction Properties}

 A mechanism is \textbf{individually rational} (IR) when an agent is guaranteed non-negative utility: $u_i(v_i; v) \geq 0$ $\forall i \in N, v\in V$ -- the agent will never be made to overpay for what they win (assuming they bid truthfully). A mechanism is \textbf{dominant-strategy} \textbf{incentive-compatible} (DSIC) or \textbf{strategyproof} if every agent maximizes their own utility by bidding truthfully, regardless of the other agents' bids. To define this notion formally, it is useful to first define the notion of \textbf{regret}, which is the difference in utility between the bid player $i$ actually made (for our purposes, typically a truthful bid) and the best possible strategic bid:
\begin{equation}
    \rgt_i(v) = \max_{b_i} u_i(b_i, v_{-i}) - u_i( v_i, v_{-i})
    \label{eq:regret}
\end{equation}

An auction is DSIC when regret (for a truthful bid) is always zero for every player -- they have no incentive to do anything other than tell the truth.
 
In addition to satisfying the IR and DSIC constraints, the auctioneer seeks to \textbf{maximize their expected revenue}. If the auction is truly DSIC, we assume players will bid truthfully, and as a result revenue is simply $E_{v\sim V}[\sum_{i \in N}p_i(v)]$.

\subsection{Optimal Auction Design}

Myerson's seminal 1981 work on auctions settled the question of revenue-maximizing, strategyproof auction design in the common setting of selling a single item \cite{Myerson1981}. However, little analytical progress has been made since then, outside of partial results in restricted settings (many involve selling multiple items to only a single bidder) \cite{manelli2006bundling, pavlov2011optimal, giannakopoulos2014duality, daskalakis2017strong, yao2017dominant}. %
Because deriving analytic solutions to mechanism design problems has been so difficult, another trend within the research community has been to approximate mechanisms by formulating the mechanism design problem as a learning problem -- the learned solutions may be adequate in their own right, as well as providing a starting point for theoretical investigation. 

\subsection{Optimal Auction Design Through Deep Learning}

D{\"u}tting et al. published work on RegretNet, a neural network architecture that models an auction mechanism \cite{dutting2019optimal} -- this work has been extended and applied in other areas \cite{shen2019automated, feng2018deep}. The core idea of RegretNet is that in the Bayesian auction setting, one knows the valuation distributions from which samples can presumably be drawn, and the allocation and payment rules are just functions, so one can use neural networks as function approximators, with a learning objective designed to maximize revenue while enforcing strategyproofness.

In detail, the allocation and payment functions $(g(b), p(b))$ are represented as neural networks $(g^w(b), p^w(b))$ where $w$ is the set of learned weights. These networks are standard feedforward networks. The allocation networks $g$ end with a softmax layer, to ensure that allocations are valid categorical distributions (additionally in the unit-demand setting, that each player is allocated a single item). The payment network ends with a sigmoid layer, outputing a value $\tilde{p}_i$ in $[0,1]$ for each bidder; given the allocations z, the final payment $p_i = \tilde{p}_i\left(\sum_j z_{i,j} v_{i,j}\right)$. This ensures individual rationality cannot be violated.

The training data for RegretNet is a dataset of $L$ bid profiles sampled from the valuation distribution $V$; these are used for training by standard gradient descent. The goal is to maximize the payments drawn from truthful bids, subject to strategyproofness. Maximizing expected payment can be done by simply maximizing the mean payment over training samples; enforcing strategyproofness is more difficult. The authors of RegretNet relax the notion of strict dominant-strategy incentive compatibility to a slightly weaker notion of expected regret: $\mathbb{E}_v\left[\sum_i \rgt_i(v)\right]$ -- note if this is exactly zero, then the mechanism is truly DSIC.

To estimate the regret under a specific valuation, the authors of \cite{dutting2019optimal} perform gradient ascent on the network inputs to find a nontruthful bid that maximizes player utility -- this is a quantity they call $\widehat{\rgt}_i$.

To enforce the regret constraint, RegretNet uses the augmented Lagrangian method and incorporates a set of Lagrange multipliers $\lambda = \{\lambda_1, ..., \lambda_n\}$ and a quadratic parameter $\rho$.
\begin{multline}
\mathcal{C}_\rho(w;\lambda) = 
-\frac{1}{L} \sum_{\ell =1}^{L} \sum_{i \in N} p_i^{w}(v^{(\ell)})\\ +
\sum_{i \in N} \lambda_i \widehat{rgt_i}(w) +
\frac{\rho}{2}(\sum_{i \in N}\widehat{rgt_i}(w))^2.
\label{eq:regretnetloss}
\end{multline}

The training procedure involves alternating gradient steps to solve $\min_{w} \max_{\lambda} C_\rho(w; \lambda)$, as well as 25 gradient ascent steps at each iteration to approximate $\widehat{\rgt}_i$. At test time, revenue is evaluated on new samples and $\widehat{\rgt}_i$ is approximated using 1000 gradient ascent steps.

\subsection{Fairness} 

First, we discuss real-world examples of unfairness in advertising auctions. Here, the unfairness is suffered by the individuals whose ad impressions are the ``items'' up for auction. We then describe one attempt from the literature to formalize fairness, which we will adopt as an additional constraint in the RegretNet approach to auction learning.

\subsubsection{Unfairness in Real-World Ad Auctions}

The ad auctions currently in place throughout the internet have been shown to produce discriminatory ad allocations. A core feature of online ad platforms is the ability to target users with certain properties. Thus, in an online ad allocation, platforms will typically consider additional factors in tandem with advertiser bids. For instance, key components include the demographics of their users and the target audience of the advertiser.  Google's and Facebook's platforms take in user attributes (such as location, device type, and search query) as well as advertisement relevance in each auction \cite{googleADs,facebookBusi}.

 The practice of targeting advertisements has in the past lead to discriminatory allocations. For example, an experiment in 2013 \cite{sweeney_2013} publicized the disproportionate likelihood of receiving online ads related to arrest records with a search query of a black-sounding name in contrast with a white-sounding name, even when the advertiser's preferred search queries and bids represented white and black sounding names equally. Additionally, \cite{AutomatedExperimentsonAdPrivacySettings} showed that between female and male users, male users with the same Google search queries tended to receive advertisements for higher-paying job offers than female users.

Facebook's auctions are similar to Google's, with an extended focus on user targeting, with over 2,000 differentiating user categories, including location, age, and income. A study in 2018 showcased the immensity of Facebook's resources, with the proven ability to target users by the single-person and single-household level. As highlighted in the study, this not only violates user privacy, but could put users in vulnerable locations, such as cancer treatment facilities, Planned Parenthood, and rehab centers, at risk \cite{faizullabhoy_a_2018}. Furthermore, \cite{Ali_2019} describes an automated advertisement classification system which determines the ideal demographic for an ad regardless of an advertiser's preferences. This feature has lead to discriminatory allocations based on users' race and gender for ads such as jobs and housing.

Finally, research exploring the disparity of advertisements of STEM job opportunities between male and female users has concluded that Facebook's determination of user prices could lead to discriminatory allocations. In other words, a low-bidding job advertisement that intends to advertise to all users regardless of gender may win more allocations with male users, because Facebook rates female users as more expensive, as women have been noted to interact with advertisements more \cite{lambrecht_tucker_2016}. Ongoing lawsuits regarding Facebook's discriminatory advertisement mechanism confirm unfairness within online ad auctions is a real concern \cite{merrill_2020}.

In all these cases, as mentioned above, fairness is with respect to the ads served to the users, corresponding to ``items'' in the typical model of auctions. (We emphasize this to distinguish our case from the more typical problem of fair mechanism design, where one is concerned with a fair allocation for the mechanism participants.) To mathematically formalize a notion of fairness in this context, we utilize the definition of \textbf{total variation fairness} from \cite{ilventoTVfair} due to its generality.
\subsubsection{Formalizing Unfairness}

Let $C = \{C_1,...,C_c\}$, denote a partition of the set $[n]$ of advertisers, or agents, into c categories. For $1 \leq k \leq c$, let $d^k: M \times M \rightarrow [0, 1]$ define a distance metric between all pairs of users, or items.  The auction mechanism satisfies \textbf{total variation fairness} if the $\ell_1$-distance between allocations (summed over a subset of advertisers $C_k$) for any two users is at most the distance between those users. That is, total variation fairness is satisfied when 
\begin{equation}
  \forall k \in \{1, ..., c\}, \forall j, j' \in M, \sum_{i\in C_k}|z_{i,j} - z_{i,j'}| \leq d^k(j,j').
    \label{eq:tvfair}
\end{equation}
In other words, similar users cannot be treated too differently, although the degree of permissible different treatment might be tighter (for instance, for job or housing advertisements) or looser. For example, if $d^k(j, j')$ were simply defined to be constant, this would disallow allocations in which one item is allocated significantly more than another. If certain advertisers need not worry about unfairness, they might be put into a different category $C_k$ with looser constraints. Likewise if unfairness is less of a concern between certain pairs of items, their distance could be greater, allowing more disparity in allocations.

\subsection{Our Work} 

Our research is an amalgamation of deep learning techniques and fairness concerns; we propose a machine learning solution to find an auction that is DSIC, IR, revenue maximizing, and fair. Our work extends the RegretNet architecture to satisfy the total variation fairness constraint between all pairs of auction items.

\section{Methodology}

\subsection{Fairness Constraint}

To adapt the definition \ref{eq:tvfair} for use as a neural loss function, we define \textbf{unfairness} as a measure of how much the total variation constraint is violated by an auction allocation. The unfairness experienced by a user $j$ is:
\begin{equation}
\small \hspace{-2mm} %
    \unf_j = \sum_{{j'}\in M}\sum_{C_k\in C}\max(0, (\sum_{i\in C_k}\max(0, z_{i,j} - z_{i,j'})) - d^k(j,j'))
\label{eq:unfloss}
\end{equation}
A sum of unfairness over all users $\overline{\unf} = \sum_{j\in M} \unf_j$ allows us to quantify how unfair an auction outcome is for all users involved. 
\subsection{Network Architecture}
We use the same additive and unit-valuation network architectures as RegretNet for arbitrary numbers of agents and items. We enforce our fairness constraint using the augmented Lagrangian approach in RegretNet by incorporating an additional set of multipliers $\lambda_f$. Our modified loss function $\mathcal{C}_{\rho}(w;\lambda)$ is defined as:
\begin{equation}
\begin{aligned}
\mathcal{L}_{\rgt} &= \sum_{i \in N} {\lambda_{(r,i)}} \rgt_i(w) + \frac{\rho_r}{2}(\sum_{i \in N}\rgt_i(w))^2 \\
\mathcal{L}_{\unf} &= \sum_{j \in M} \lambda_{(f,j)}\unf_j(w) + \frac{\rho_f}{2}(\sum_{i \in M}\unf_j(w))^2 \\
\mathcal{C}_\rho(w;\lambda) &= -\frac{1}{L} \sum_{l=1}^{L} \sum_{i \in N} p_i^{w}(v^{(l)}) + \mathcal{L}_{\rgt} + \mathcal{L}_{\unf}
\end{aligned}
\label{eq:propotionnetloss}
\end{equation}

\subsection{Training Procedure}

The procedure closely matches that of RegretNet, with the addition of updating $\lambda_f$ and an additional quadratic parameter $\rho_f$ for our fairness penalty.
\begin{algorithm}
\caption{ProportionNet Training}\label{euclid}
\begin{algorithmic}[1]

\State \textbf{Input:} Minibatches $S_1, ..., S_T$ of size $B$

\State \textbf{Parameters:} $\rho^t_r, \rho^t_f, \gamma, \eta \in \mathbb{R}_{\geq 0}$ $\Gamma \in \mathbb{N}$

\State \textbf{Initialize:} $w^0 \in \mathbb{R}^d, \lambda^0_r \in \mathbb{R}^n, \lambda^0_f \in \mathbb{R}^m$

\For{$t = 0$ to $T$}
    \State Receive minibatch $S_t = \{v^{(1)}, ..., v^{(B)}\}$
    \State Initialize misreports $v_i^{\prime(\ell)}\in V_i,\forall \ell \in [B],i\in N$
    \For{$\gamma = 0$ to $\Gamma$}
        \For{$\ell \in [B], i \in N$}
            \State ${v'}^{(\ell)}_i \leftarrow {v'}^{(\ell)}_i +  \gamma\nabla_{v{'}_i}u_i^w{(v_i^{(\ell)}}; (v{'}_{i}^{(\ell)}, v_{-i}^{(\ell)}))$
        \EndFor
    \EndFor
    \State Compute Lagrangian gradient and update $w^t$:
    \State $w^{t+1} \leftarrow w^t - \eta\nabla_w\mathcal{C}_{\rho_t}(w^t, \lambda^t_r, \lambda^t_f)$
    \State Update Lagrange multipliers every $Q_r$ (regret) and $Q_f$ (fairness) iterations:
    \If{$t$ is a multiple of $Q_r$}
        \State $\lambda_{(r,i)}^{t+1} \leftarrow \lambda_{(r,i)}^t + \rho_r^t \widetilde{rgt}_i(w^{t+1}), \forall i \in N$
    \Else 
        \State $\lambda_{(r,i)}^{t+1} \leftarrow \lambda_{(r,i)}^t$
    \EndIf

    \If{$t$ is a multiple of $Q_f$}
        \State $\lambda_{(f,i)}^{t+1} \leftarrow \lambda_{(f,i)}^t + \rho_f^t \widetilde{unf}_i(w^{t+1}), \forall i \in M$
    \Else 
        \State $\lambda_{(f,i)}^{t+1} \leftarrow \lambda_{(f,i)}^t$
    \EndIf
    
\EndFor
\end{algorithmic}
\end{algorithm}

\subsection{Generalization Bound}

When measuring expected unfairness, we cannot directly compute the expected value---instead, we must estimate it from samples of individual valuation profiles. Similarly to \citet{dutting2019optimal}, we wish to bound the generalization error when estimating auction unfairness from samples---hewing closely to techniques and definitions presented there, we do this in terms of the covering number of the class of auctions, showing that with high probability, our sample estimate is a good upper bound of true expected unfairness.

\begin{restatable}{theorem}{bound} Let $\mathcal{M}$ be a class of auctions that satisfy individual rationality and have $\ell_{\infty, 1}$ covering number $\mathcal{N}_{\infty}(\mathcal{M}, \cdot)$. Fix $\delta \in (0, 1)$. With probability at least $1 - \delta$ over a draw of $L$ valuation profiles, for any $(g^w, p^w) \in \mathcal{M}$, 

\begin{dmath*}
    \mathbb{E}_{v}\left[\sum_{j=1}^m \unf_j \circ g^w(v)\right] \leq \frac{1}{L} \sum_{\ell=1}^{L} \sum_{j=1}^{m} \unf_j \circ g^w(v^\ell) + 2\Delta_L + 4C\sqrt{\frac{2\log(4/\delta)}{L}}
\end{dmath*}

where $C$ is a constant and 

\begin{equation*}
    \Delta_L = \inf_{\epsilon > 0}\left((nm^2 + \epsilon)\sqrt{\frac{2 \log \mathcal{N}_\infty(\mathcal{M}, \frac{\epsilon}{2m^3})}{L}} + \epsilon\right) .
\end{equation*}
\end{restatable}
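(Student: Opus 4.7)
The plan is to follow the covering-number-plus-concentration template used by D\"utting et al.\ to bound expected regret, adapting it to the unfairness functional $F(z) := \sum_{j=1}^{m} \unf_j(z)$. At a high level, I would first replace each candidate allocation network $g^w$ by a nearby element $\hat g$ of a finite $\ell_{\infty,1}$-cover $\hat{\mathcal{M}}$ of $\mathcal{M}$, paying an approximation error controlled by the Lipschitz constant of $F$; then apply a bounded-differences concentration inequality to each fixed $\hat g$ in the cover; and finally take a union bound over $\hat{\mathcal{M}}$.

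The first step is to establish two quantitative properties of $F$. (i) \emph{Boundedness.} Since every allocation entry lies in $[0,1]$ and the $C_k$ partition $N$, the inner $\max$ is at most $1$, the sum over each $C_k$ is at most $|C_k|$, summing over $k$ collapses to at most $n$, and the outer sums over $j'$ and $j$ then give $F(z) \leq nm^2$. This plays the role of the constant $C$ in the concentration term and of the leading $nm^2$ inside $\Delta_L$. (ii) \emph{Lipschitz continuity.} Since $x \mapsto \max(0,x)$ is $1$-Lipschitz, propagating a perturbation $\delta z$ through the nested $\max/\sum$ structure bounds $|F(z+\delta z)-F(z)|$ by a polynomial in $m$ and $n$ times an appropriate norm of $\delta z$; the declared cover radius $\epsilon/(2m^3)$ is calibrated so that, for every valuation profile $v$, $|F(g^w(v)) - F(\hat g(v))| \leq \epsilon$.

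The second step combines these. Fix any $\hat g \in \hat{\mathcal{M}}$. The random variable $F(\hat g(v))$ is bounded by $nm^2$, so McDiarmid's inequality gives
\[
\mathbb{E}_v[F(\hat g(v))] - \tfrac{1}{L}\sum_{\ell=1}^L F(\hat g(v^\ell)) \;\leq\; O(nm^2)\sqrt{\tfrac{\log(1/\delta')}{L}}
\]
with probability at least $1-\delta'$. Union bounding over $\hat{\mathcal{M}}$ and setting $\delta' \asymp \delta/\mathcal{N}_\infty$ inflates the logarithm to $\log\mathcal{N}_\infty(\mathcal{M}, \epsilon/(2m^3)) + \log(1/\delta)$. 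Adding back the cover approximation error $\epsilon$ on both the empirical and expected sides contributes the additive $2\epsilon$ term in $\Delta_L$ and replaces the prefactor $nm^2$ by $nm^2+\epsilon$ inside the square root. Taking the infimum over $\epsilon > 0$ recovers $\Delta_L$, and splitting the failure probability between this event and any auxiliary uniform-boundedness events produces the $\log(4/\delta)$ factor rather than $\log(1/\delta)$.

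The main technical obstacle is step~(ii): propagating perturbations through the nested quantity $\max(0, \sum_{i\in C_k}\max(0, z_{i,j}-z_{i,j'}) - d^k(j,j'))$ while losing only the polynomial factors in $m$ and $n$ that match the stated cover radius. The two $\max$ operators are themselves harmless ($1$-Lipschitz each), but the inner differences $z_{i,j}-z_{i,j'}$ couple two entries of $z$, and every entry of the allocation matrix appears in $\Theta(m)$ many summands of $F$, which is exactly what produces the high power of $m$ appearing in the cover-radius normalization. Beyond this bookkeeping, the argument is a direct transcription of D\"utting et al.'s generalization proof with $\rgt$ replaced by $\unf$.
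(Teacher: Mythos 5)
Your proposal is essentially correct, and its technical core---the uniform bound $\sum_j \unf_j \le nm^2$ and the Lipschitz propagation of an $\ell_{\infty,1}$ cover of $\mathcal{M}$ through the nested $\max/\sum$ structure to arrive at the $\epsilon/(2m^3)$ cover radius---is exactly what the paper does; the paper organizes this as a chain of covering-number inequalities $\mathcal{N}_\infty(\overline{\unf}\circ\mathcal{G},\epsilon) \le \mathcal{N}_\infty(\unf\circ\mathcal{G},\epsilon/m) \le \mathcal{N}_\infty(\mathcal{G},\epsilon/(2m^3)) \le \mathcal{N}_\infty(\mathcal{M},\epsilon/(2m^3))$, using $|\max(0,a)-\max(0,b)|\le|a-b|$ and the fact that each allocation entry appears in $m$ of the summands, just as you anticipate. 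Where you genuinely diverge is the final concentration step: you propose McDiarmid's inequality for each fixed element of the cover followed by a union bound, whereas the paper bounds the empirical Rademacher complexity of $\overline{\unf}\circ\mathcal{G}$ via Massart's lemma---this is precisely where the expression $(nm^2+\epsilon)\sqrt{2\log\mathcal{N}_\infty/L}+\epsilon$, and hence $\Delta_L$, comes from---and then invokes the standard Rademacher generalization lemma of Shalev-Shwartz and Ben-David, which supplies the $2\Delta_L$ and $4C\sqrt{2\log(4/\delta)/L}$ terms verbatim. Both routes are standard ways to convert the same covering-number estimate into a uniform high-probability bound; your union-bound route merges $\log\mathcal{N}_\infty$ and $\log(1/\delta)$ under a single square root with prefactor $nm^2$, which by $\sqrt{a+b}\le\sqrt{a}+\sqrt{b}$ yields a bound of the stated shape but with a deviation constant depending on $n$ and $m$ rather than the lemma's distribution-independent $C$, so to recover the theorem's constants exactly one should follow the Rademacher route. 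Two minor remarks: the $+\epsilon$ in the prefactor $(nm^2+\epsilon)$ arises in the paper from bounding $|\hat f(v^\ell)|\le|f(v^\ell)|+\epsilon$ inside Massart's lemma, not from the cover error on the empirical/expected sides as you suggest; and the paper restricts its proof to a uniform distance $d$ with a single advertiser category, while your boundedness argument already accommodates the general partition $\{C_k\}$.
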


\section{Experiments}

To experimentally test ProportionNet, we train it in different auction settings with known valuation distributions. Following \cite{dutting2019optimal}, we consider settings involving selling to one agent where revenue-maximizing solutions are known, and additionally add fairness constraints. We then consider settings with more agents and items beyond the reach of theory. Finally, we consider additional settings where the tradeoffs between fairness and bidder preferences are more complex.

\subsection{Experimental Parameters}
For each configuration of $n$ agents and $m$ items, we trained ProportionNet for a maximum of 120 epochs using 640,000 training samples. We used two hidden layers for settings A and B and three for D, E, and F. The hidden layers for setting C are shown in Table \ref{tab:crevenue}. All networks used 100 hidden nodes per layer. We incremented both $\rho_r$ and $\rho_f$ every two epochs and $\lambda_r$ and $\lambda_f$ every 100 iterations. Finally, we used the Adam optimizer for training.

\subsection{The Manelli-Vincent and Pavlov Auctions}
\citet{dutting2019optimal} successfully reproduced the analytic solutions where they were known using the RegretNet framework. These settings are as follows:

A. Single-bidder with additive valuations over two items. Item values are independent draws from $U[0, 1]$ \cite{manelli2006bundling}.

B. Single-bidder, unit-demand valuations over two items. Item values are independent draws from $U[2,3]$ \cite{pavlov2011optimal}.

We train on these settings and additionally apply a uniform fairness constraint to both of these settings. The entries of the total variation fairness distance matrix $D$ are all set to a constant $d(j, j') = d$ for all pairs of users $j$ and $j'$. 

 Figures \ref{fig:mvalloc} and \ref{fig:pvalloc} show the allocation probabilities under a given bid by the single bidder. Training using $d=1$ approximates the revenue-maximizing auction (as this is just standard RegretNet). Training with $d=0$ results in an auction where both items are always allocated with equal probability. Figure 3 shows the training curves for revenue as ProportionNet is trained on different magnitudes of fairness -- note that the decrease in revenue over time reflects the network learning to enforce the constraints, but that stronger fairness constraints result in lower expected revenue.

\begin{figure}
    \centering
    \includegraphics[scale=0.20]{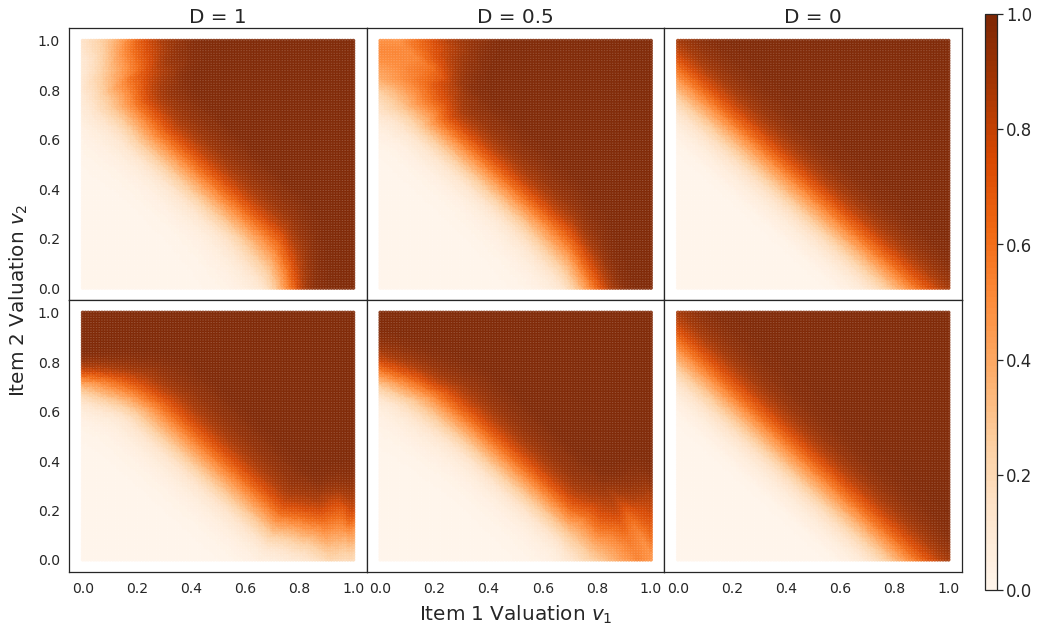}
    \caption{Setting A allocation rule after training with varying values of $D$. Rows 1 and 2 indicate allocation probabilities for Item 1 and 2, respectively.}
    \label{fig:mvalloc}
\end{figure}

\begin{figure}
    \centering
    \includegraphics[scale=0.20]{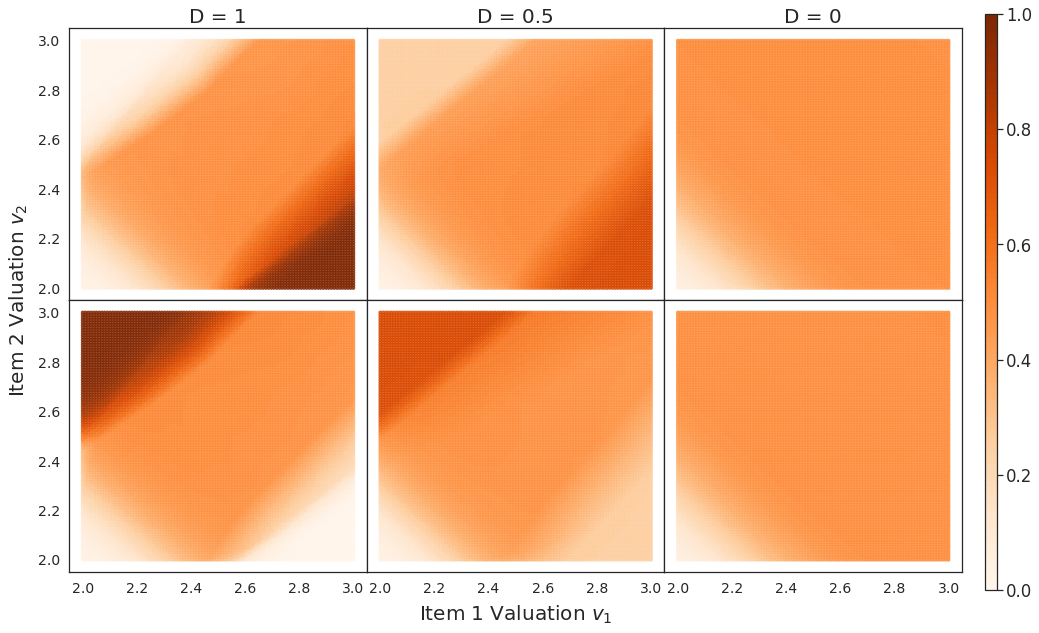}
    \caption{Setting B allocation rule after training with varying values of $D$. Rows 1 and 2 indicate allocation probabilities for Item 1 and 2, respectively.}
    \label{fig:pvalloc}
\end{figure}

\begin{figure}
    \centering
    \includegraphics[scale=0.25]{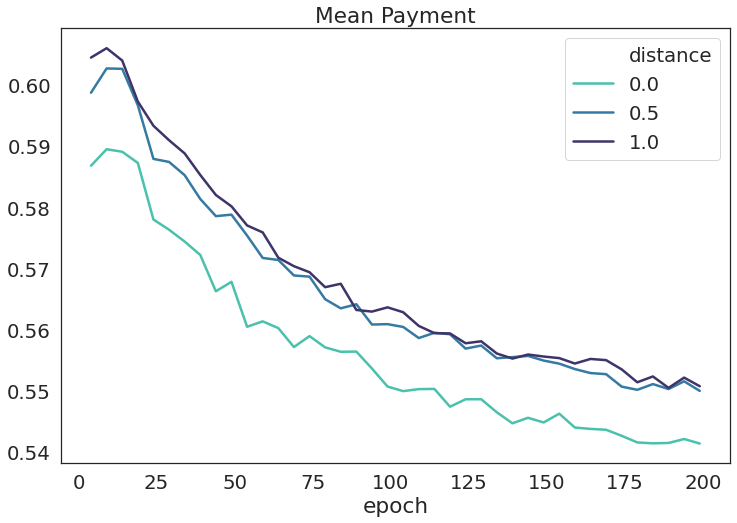}
    \caption{Expected revenue as ProportionNet is trained with different degrees of fairness on the Manelli-Vincent (setting A) auction.}
    \label{fig:mvpayment}
\end{figure}

\begin{table*}[t]
    \centering
\begin{tabular}{|c|c|c|c|c|c|c|c|}
\hline
\multicolumn{8}{|c|}{Sweep Revenue, Mean (StDev)} \\ \hline
n x m & $\ell$ & Myr & \multicolumn{5}{|c|}{D}          \\ \hline
-     & -    & - & 1.00 & 0.75 & 0.50 & 0.25 & 0.00 \\
\hline
1 x 2 & 2 & 0.50 & 0.546 (0.369) & 0.545 (0.363) & 0.544 (0.367) & 0.541 (0.353) & 0.538 (0.374) \\
1 x 3 & 2 & 0.75 & 0.858 (0.504) & 0.856 (0.491) & 0.844 (0.482) & 0.845 (0.474) & 0.845 (0.484) \\
1 x 4 & 2 & 1.00 & 1.199 (0.602) & 1.191 (0.579) & 1.182 (0.591) & 1.189 (0.591) & 1.180 (0.589) \\
1 x 5 & 2 & 1.25 & 1.540 (0.720) & 1.532 (0.714) & 1.533 (0.711) & 1.529 (0.698) & 1.534 (0.720) \\
1 x 6 & 2 & 1.50 & 1.877 (0.832) & 1.895 (0.801) & 1.892 (0.805) & 1.892 (0.811) & 1.886 (0.819) \\
\hline
2 x 2 & 2 & 0.83 & 0.865 (0.349) & 0.858 (0.343) & 0.852 (0.334) & 0.838 (0.327) & 0.830 (0.323) \\
2 x 3 & 2 & 1.25 & 1.234 (0.399) & 1.221 (0.388) & 1.215 (0.382) & 1.209 (0.383) & 1.206 (0.382) \\
2 x 4 & 2 & 1.67 & 1.720 (0.462) & 1.709 (0.452) & 1.683 (0.444) & 1.680 (0.438) & 1.681 (0.438) \\
2 x 5 & 3 & 2.08 & 2.194 (0.525) & 2.168 (0.509) & 2.138 (0.496) & 2.135 (0.492) & 2.133 (0.489) \\
2 x 6 & 3 & 2.50 & 2.665 (0.570) & 2.631 (0.552) & 2.594 (0.537) & 2.594 (0.533) & 2.591 (0.531) \\
\hline
3 x 2 & 2 & 1.06 & 1.056 (0.287) & 1.036 (0.280) & 1.022 (0.274) & 1.010 (0.269) & 1.004 (0.268) \\
3 x 3 & 2 & 1.59 & 1.546 (0.344) & 1.533 (0.336) & 1.502 (0.326) & 1.499 (0.323) & 1.500 (0.324) \\
3 x 4 & 3 & 2.12 & 2.081 (0.404) & 2.040 (0.386) & 2.011 (0.374) & 2.003 (0.373) & 2.001 (0.371) \\
3 x 5 & 3 & 2.66 & 2.577 (0.443) & 2.540 (0.426) & 2.510 (0.416) & 2.498 (0.413) & 2.495 (0.412) \\
3 x 6 & 3 & 3.19 & 3.073 (0.481) & 3.048 (0.471) & 3.011 (0.462) & 3.013 (0.458) & 3.007 (0.456) \\
\hline
4 x 2 & 2 & 1.23 & 1.209 (0.302) & 1.177 (0.276) & 1.151 (0.256) & 1.135 (0.242) & 1.129 (0.244) \\
4 x 3 & 3 & 1.84 & 1.769 (0.286) & 1.720 (0.284) & 1.678 (0.286) & 1.645 (0.293) & 1.642 (0.294) \\
4 x 4 & 3 & 2.45 & 2.247 (0.302) & 2.204 (0.307) & 2.156 (0.334) & 2.159 (0.334) & 2.158 (0.336) \\
4 x 5 & 4 & 3.06 & 2.819 (0.426) & 2.765 (0.394) & 2.707 (0.375) & 2.692 (0.377) & 2.693 (0.379) \\
4 x 6 & 4 & 3.68 & 4.284 (0.491) & 3.279 (0.421) & 3.246 (0.413) & 3.306 (0.399) & 3.007 (0.447) \\
\hline
5 x 2 & 3 & 1.34 & 1.307 (0.295) & 1.305 (0.242) & 1.268 (0.228) & 1.239 (0.221) & 1.230 (0.224) \\
5 x 3 & 3 & 2.02 & 1.906 (0.360) & 1.894 (0.301) & 1.824 (0.273) & 1.777 (0.263) & 1.767 (0.269) \\
5 x 4 & 4 & 2.69 & 2.389 (0.326) & 2.381 (0.349) & 2.271 (0.305) & 2.291 (0.314) & 2.318 (0.314) \\
5 x 5 & 4 & 3.36 & 3.670 (0.402) & 2.906 (0.358) & 2.876 (0.353) & 2.851 (0.348) & 2.836 (0.352) \\
5 x 6 & 5 & 4.03 & 3.489 (0.618) & 3.467 (0.417) & 3.491 (0.420) & 3.445 (0.493) & 3.385 (0.370) \\
\hline
\end{tabular}
\caption{Setting C Revenue -- auctions with $U[0,1]$ valuations for $n$ bidders and $m$ items.}
\label{tab:crevenue}
\end{table*}

\subsection{Scaling Up}
Next, we experiment with larger auctions where there may be no viable analytical solution, even without considering fairness constraints. We define setting C:

C. $n$ bidders with additive valuations over $m$ items. All values (regardless of bidder or item) are independent draws from $U[0, 1]$.

We tested all combinations of $n=1,...,5$ bidders, $m=2,...,6$ items, and fairness constraints $d=0,0.25,..., 1.00$.

Table \ref{tab:crevenue} shows the results. $\text{Myr}$ denotes expected revenue of the itemwise Myerson auction -- selling each item independently in a strategyproof auction -- while $\ell$ denotes the number of hidden layers used for training (100 hidden nodes per layer). More detailed charts with individual regret and unfairness values and standard deviations are in the appendix.

Note that as the number of agents and items increases, it becomes increasingly difficult to both maximize revenue and obey the regret and unfairness constraints. Thus, some of our results (which were primarily selected on the criterion of low mean and standard deviation of regret and unfairness) do not exceed the baseline itemwise Myerson revenue. Additionally, results for the 4 x 6 and 5 x 5 auctions significantly exceed the itemwise Myerson, but their regret and unfairness values are also high. However, prior success in applying the augmented Lagrangian to higher-complexity auctions (3 x 10, 5 x 10) in addition to RegretNet's sensitivity to hyperparameter search \cite{Rahme20:Auction} suggest that these problems can be resolved at the cost of greater computational resources.

\subsection{Non-uniform Fairness}

\begin{table*}[t]
    \centering
    \begin{tabular}{|c|c|c|c|c|c|}                  \hline
    \multicolumn{6}{|c|}{Setting D Revenue: Mean (StDev)} \\ \hline
    b    & \multicolumn{5}{|c|}{d}               \\ \hline
    -    & 1.00  & 0.75  &  0.50 & 0.25  & 0.00  \\ \hline
    0.00 & 2.043 (0.396) & 2.018 (0.381) & 1.999 (0.376) & 1.992 (0.373) & 1.989 (0.376) \\ \hline
    0.25 & 2.544 (0.397) & 2.515 (0.383) & 2.486 (0.376) & 2.489 (0.375) & 2.494 (0.377) \\ \hline
    0.50 & 3.037 (0.395) & 3.016 (0.384) & 2.990 (0.376) & 2.988 (0.376) & 2.997 (0.379) \\ \hline
    0.75 & 3.540 (0.394) & 3.515 (0.384) & 3.489 (0.376) & 3.487 (0.376) & 3.493 (0.380) \\ \hline
    1.00 & 4.037 (0.392) & 4.010 (0.383) & 3.987 (0.375) & 3.988 (0.377) & 3.996 (0.379) \\ \hline
    \end{tabular}
     \caption{Revenue for Setting D -- a 3 bidder x 4 item auction with items 1 and 2 valued at $U[0, 1]$ and items 3 and 4 at $U[0, 1] + b$. Fair allocations are enforced for items pairs (1,3) and (2,4).}
    \label{tab:drevenue}
\end{table*}

\begin{table*}[t]
    \centering
    \begin{tabular}{|c|c|c|c|c|c|}                  \hline
    \multicolumn{6}{|c|}{Setting E Revenue: Mean (StDev)} \\ \hline
    b    & \multicolumn{5}{|c|}{d}               \\ \hline
    -    & 1.00  & 0.75  &  0.50 & 0.25  & 0.00  \\ \hline
    0.00 & 2.043 (0.396) & 2.031 (0.389) & 2.039 (0.392) & 2.041 (0.392) & 2.033 (0.393) \\ \hline
    0.25 & 2.544 (0.397) & 2.521 (0.388) & 2.528 (0.390) & 2.537 (0.394) & 2.525 (0.393) \\ \hline
    0.50 & 3.037 (0.395) & 3.025 (0.385) & 3.005 (0.384) & 3.006 (0.383) & 3.010 (0.385) \\ \hline
    0.75 & 3.540 (0.394) & 3.517 (0.386) & 3.505 (0.383) & 3.511 (0.384) & 3.509 (0.385) \\ \hline
    1.00 & 4.037 (0.392) & 4.019 (0.381) & 3.999 (0.382) & 4.015 (0.382) & 4.006 (0.385) \\ \hline
    \end{tabular}
     \caption{Revenue for Setting E -- identical to setting D but fair allocations are enforced for items 2, 3, and 4. Interestingly, there is a slight increase in revenue from Setting D.}
    \label{tab:erevenue}
\end{table*}

\begin{table*}[t]
    \centering
    \begin{tabular}{|c|c|c|c|c|c|}                  \hline
    \multicolumn{6}{|c|}{Setting F Revenue: Mean (StDev)} \\ \hline
    b    & \multicolumn{5}{|c|}{d}               \\ \hline
    -    & 1.00  & 0.75  &  0.50 & 0.25  & 0.00  \\ \hline
    0.00 & 2.043 (0.396) & 2.022 (0.385) & 2.004 (0.377) & 1.992 (0.373) & 1.992 (0.375) \\ \hline
    0.25 & 2.544 (0.397) & 2.525 (0.387) & 2.494 (0.377) & 2.492 (0.375) & 2.495 (0.375) \\ \hline
    0.50 & 3.037 (0.395) & 3.022 (0.386) & 2.991 (0.375) & 2.987 (0.376) & 2.994 (0.377) \\ \hline
    0.75 & 3.540 (0.394) & 3.514 (0.382) & 3.494 (0.376) & 3.492 (0.375) & 3.493 (0.377) \\ \hline
    1.00 & 4.037 (0.392) & 4.012 (0.384) & 3.988 (0.379) & 3.992 (0.376) & 3.993 (0.377) \\ \hline
    \end{tabular}
     \caption{Revenue for Setting F -- identical to setting D but one of the bidders is given fair allocations for user pairs (1,2) and (3,4).}
    \label{tab:frevenue}
\end{table*}

In addition to the uniform fairness experiments above, we define three simple settings to investigate enforcement of fairness in cases where preferences may be more complex. Here, we consider auctions with three bidders and four items denoted $u_{1...4}$. Each item has binary features $f_1$ and $f_2$ which are used to construct the distance values used in our fairness constraint -- in an ad auction setting, these could be characteristics of different groups of users.

D. The parity of users with respect to both $f_1$ and $f_2$ is balanced. All bidders are constrained by a distance metric on $f_2$.

E. The parity of users with respect to $f_2$ is imbalanced. All bidders are constrained by a distance metric on $f_2$.

F. Identical to setting C, but bidder 3 is constrained by a distance metric on $f_1$ rather than $f_2$.

\begin{table}
    \centering
    \begin{tabular}{c|c|c}
         D, F  & $f_1$ & $f_2$  \\ \hline
         $u_1$ & 0 & 0 \\
         $u_2$ & 0 & 1 \\
         $u_3$ & 1 & 0 \\
         $u_4$ & 1 & 1 
    \end{tabular}
    \quad
    \begin{tabular}{c|c|c}
         E     & $f_1$ & $f_2$ \\ \hline
         $u_1$ & 0 & 1 \\
         $u_2$ & 0 & 1 \\
         $u_3$ & 1 & 0 \\
         $u_4$ & 1 & 1 
    \end{tabular}
    \caption{User features for settings D, E, and F.}
    \label{tab:cdfeatures}
\end{table}

Suppose that $f_2$ is a ``relevant'' trait (e.g. advertisers are showing software engineer hiring ads, and $f_2$ denotes whether or not a user has a computer science degree). Since the user pairs $u_{1,3}$ and $u_{2,4}$ have matching values of $f_2$, we are interested in treating them similarly using the distance function
$$D_{j,j'}(d) = 1 - (1-d)(1 - |f_2(u_j) - f_2(u_{j'})|) , $$
where the parameter $d \in [0, 1]$ adjusts the level of fairness. A value of $d=1$ has no consideration for fairness, while $d=0$ requires users with matching values of $f_2$ to have identical allocations. Note that in setting E, the fairness constraint applied to bidder 3 uses $f_1$ rather than $f_2$.

We are interested in how ProportionNet handles bidders who bid higher on users with the $f_1$ feature. If $f_1$ denotes a feature (e.g. gender) which is irrelevant to the auction (software engineer hiring ads), this bidding behavior can be viewed as discriminatory. The new valuation range is $[0, 1] + b|f_1(u_j) - f_1(u_j')|$ for all three bidders, where $b \in \mathbb{R}$ is a parameter that adjusts the level of discriminatory bidding behavior. We tested a grid of values $b, d = 0, 0.25, ..., 1.00$. Tables \ref{tab:drevenue}, \ref{tab:erevenue}, and \ref{tab:frevenue} show the revenue results; information on unfairness and regret is in the appendix.

\section{Ethical Impact}

Work in the field of automated mechanism design, including recent work like the RegretNet approach of \cite{dutting2019optimal}, show that tools and techniques from machine learning can help address persistent challenges in the theory of auctions.
Our work builds on this body of research by positing the tools of machine learning can address another problem in auctions: fairness considerations on the item side. 
Above, we have shown compelling empirical evidence that support this claim based on the addition of fairness constraints to RegretNet's augmented Lagrangian technique.

One of the major social problems associated with online advertising is its use in the job market. The value proposition of online advertising often involves targeting ads to specific demographic groups, and this is a serious problem when those groups may represent, even indirectly, protected classes.
At least in the United States, antidiscrimination law is codified in Title VII of the Civil Rights Act of 1964 which limits the type of behaviors employers can engage in. 
The US Supreme Court decided in Griggs v. Duke Power Co. \cite{Griggs1971}, that certain behaviors which might cause discriminatory results, even if they are performed unintentionally, are illegal -- the doctrine of ``disparate impact''.
The Court has shied away from rigidly defining disparate impact (in quantitative terms) \cite{Ricci2009}, but the Equal Employment Opportunity Commission (EEOC) makes determinations about disparate impact based on the 80\% rule \cite{EEOC1978}. 
This rule generally states that a group of individuals in a protected class cannot have a selection rate less than 80\% of the highest rate for another class.
Since this generally can be mathematically formalized, it has been studied in the fair ML literature from a practical and technical perspective \cite{barocas2016big, feldman2015certifying}. Our definition of fairness does not map directly onto the 80\% rule, but it shares some similarities; when the distance metric is defined in terms of protected classes, it arguably constrains allocations from having ``disparate impact''.

While in some sense, we thus provide a way to learn a mechanism that will satisfy widely-held definitions of fairness, to view our proposed approach as a cure-all would be misguided. There is a growing body of work that shows there is significant daylight between how a computer scientist thinks about fairness and how others do.
\citet{holstein2019improving} conducted interviews with developers about their desired fairness outcomes and showed them to sometimes be at odds with each other. 
\citet{saha2019measuring} demonstrated that laypeople often don't comprehend computer science notions of fairness as well.
Then there are also more sociological critiques of the general fair ML approaches writ large \cite{selbst2019fairness, hanna2020towards}.
As \citet{selbst2019fairness} and \cite{hutchinson201950} point out, the reification of fairness concepts into mathematical formulae has inherent problems.
Fair ML has achieved prominence through the translation of nebulous and debatable definitions into concrete mathematics, often providing a veneer of objectivity over highly contested notions of equality and justice.

We acknowledge that our work at its present stage is a technical intervention, rather than an analysis or critique of a sociotechnical system. 
We do not aim to make prescriptive statements on our own about the ultimately correct way to design auctions that must be fair, as that is best done in an interdisciplinary group with multiple stakeholders. 
However, we hope that our work can be one useful contribution to the challenging problem of fair mechanism design.

\section{Conclusions \& Future Technical Work}\label{sec:conclusions}

Future work might include incorporating improvements to the training procedure as in \cite{Rahme20:Auction}, or making use of techniques that can exactly evaluate the degree to which strategyproofness is violated, as in \cite{Curry20:Certifying}.

Additionally, the theoretical question of characterizing which fair, strategyproof mechanisms maximize revenue is an interesting one. \cite{Celis19:Toward} has provided some useful work in this direction already, for a specific class of auctions and notion of fairness. Perhaps the use of deep-learning-based techniques to approximate fully general multi-item mechanisms can provide a starting point for theory as has happened in \cite{dutting2019optimal}.

Finally, while our work is motivated by the problem of unfairness in advertising auctions, our models are still quite stylized. Enhancing the realism of the model with real-world data, valuations, or fairness constraints derived from real settings would be quite interesting.

\section*{Acknowledgments}
Curry, Dickerson, and Dooley were supported in part by NSF CAREER Award IIS-1846237, NIST MSE Award \#20126334, DARPA GARD \#HR00112020007, DARPA SI3-CMD \#S4761, DoD WHS Award \#HQ003420F0035, and a Google Faculty Research Award.
Chiang and Goldstein were supported by the AFOSR MURI Program, DARPA GARD and DARPA QED4RML programs.
Horishny, Kuo, and Ostuni were supported by NSF Award CCF-1852352 through the University of Maryland's REU-CAAR: Combinatorics
and Algorithms Applied to Real Problems.  We thank Bill Gasarch for his standing commitment to building and maintaining a strong REU program at the University of Maryland.

\bibliographystyle{ACM-Reference-Format}
\bibliography{bibliography}

\clearpage

\section{Appendix}

\subsection{Generalization Bound for Unfairness}
We restate the theorem below:

\bound*

\begin{proof}
Let $\mathcal{G}_j$ be the class of \textbf{item-wise} allocation functions for item $j$ defined on a class of auctions $\mathcal{M}$.

Let $\unf_j \circ G$ be the class of \textbf{unfairness functions} for item $j$. A function $f_j \in \unf_j \circ G$ maps $f_j: V \rightarrow \mathbb{R}$. Extended to all items, $\unf \circ G$ is the class of tuples $(f_1, ..., f_m)$. Such a vector-valued function $f \in \unf \circ G$ maps $f: V \rightarrow \mathbb{R}^m$. Finally, we also define the class of \textbf{sum unfairness functions}: 
\begin{dmath*}
\overline{\unf} \circ G = \{{f: V \rightarrow \mathbb{R} }\mid {f(v) = \sum_{j=1}^{m} f_j(v)} \text{ for some }{(f_1, ... f_m) \in \unf \circ G_j}\}
\end{dmath*}

We prove bounds for the simple case of a uniform distance constraint $d$ between all users, with all bidders in one advertising category. In this case, given a mechanism $(g, p)$, the quantity for item $j$'s unfairness is:
\begin{dmath*}\unf_j(v) = \sum_{{j'}\in M}\max(0, \sum_{i\in N}\max(0, g_{i,j}(v) - g_{i,j'}(v)) - d)\end{dmath*}

Our proof hews very closely to the generalization bound for regret in D.2.4 of \cite{dutting2019optimal}. We use the same notion of $\ell_{\infty, 1}$ distance between functions and, under this distance, relate covering numbers of the function classes defined above:

\begin{dmath*} \mathcal{N}_{\infty}(\overline{\unf} \circ \mathcal{G}, \epsilon) 
\leq 
\mathcal{N}_{\infty}(\unf \circ \mathcal{G}, \frac{\epsilon}{m})
\leq
\mathcal{N}_{\infty}(\mathcal{G},\frac{\epsilon}{2m^3})
\leq
\mathcal{N}_{\infty}(\mathcal{M},\frac{\epsilon}{2m^3})
\end{dmath*}

These covering numbers in turn bound empirical Rademacher complexity, which allows us to apply the same lemma of \cite{shalev2014understanding} used in \cite{dutting2019optimal}.

\subsubsection{Step 1. Bounding $\mathcal{N}_{\infty}(\mathcal{G}, \epsilon) \leq \mathcal{N}_{\infty}(\mathcal{M}, \epsilon)$}

By the definition of $\mathcal{N}_{\infty}(\mathcal{M}, \epsilon)$, there exists some cover $\hat{\mathcal{M}}$ (where $\left| \mathcal{\hat{M}} \right|\leq\mathcal{N}_{\infty}(\mathcal{M}, \epsilon)$) such that $\forall (g,p) \in \mathcal{M}, \exists (\hat{g}, \hat{p}) \in \mathcal{\hat{M}}$ where 
\begin{dmath*}
\sup_{v\in V} \sum_{i,j} 
\left| g_{i,j}(v) - \hat{g}_{i,j}(v) \right| + 
\lVert p(v) - \hat{p}(v) \rVert_1 
\leq \epsilon.
\end{dmath*}

It is trivial to bound the distance between any $g \in \mathcal{G}$ and its covering $\hat{g} \in \mathcal{\hat{G}},\ \forall v\in V$:

\begin{dmath*}\sum_{j\in M} \left\| g_{\cdot,j} - \hat{g}_{\cdot,j} \right\|_1 = \sum_{i,j} \left| g_{i,j}(v) - \hat{g}_{i,j}(v)\right| \leq \sum_{i,j} 
\left| g_{i,j}(v) - \hat{g}_{i,j}(v) \right| + 
\lVert p(v) - \hat{p}(v) \rVert_1 \leq \epsilon.\end{dmath*}
Therefore, $\mathcal{N}_{\infty}(\mathcal{G}, \epsilon) \leq \mathcal{N}_{\infty}(\mathcal{M}, \epsilon)$

\subsubsection{Step 2. Bounding $\mathcal{N}_{\infty}(\unf \circ \mathcal{G}, \epsilon) \leq \mathcal{N}_{\infty}(\mathcal{G}, \frac{\epsilon}{2m^2})$}

We first bound $\mathcal{N}_{\infty}(\unf_j \circ \mathcal{G}, \epsilon) \leq \mathcal{N}_{\infty}(\mathcal{G}, \frac{\epsilon}{2m})$ for a single $j$.

Taking $g, \hat{g}$ satisfying the definition of $\mathcal{N}_\infty(\mathcal{G}, \frac{\epsilon}{2m})$ and fixing a fairness parameter $d \in [0, 1]$, we bound the $\ell_{\infty, 1}$ distance between $\unf_j \circ g$ and $\unf_j \circ \hat{g}$. Note that $g_j(v)$ has been shortened to $g_j$, $g_{i,j}(v)$ to $g_{i,j}$, etc. for convenience. We use the fact that $\left|\max(0, a) - \max(0, b)\right| \leq \left| a - b \right|$.
\begin{dgroup*}
    \begin{dmath*}
    \sup_{v\in V} \left| 
\unf_j \circ g - \unf_j \circ \hat{g}
\right|\end{dmath*}
 \begin{dmath*}= \sup_{v\in V}
\left| 
\sum_{j' \in M} \max(0, \left(\sum_{i\in N}\max(0, g_{i,j} - g_{i,j'})\right) - d) - 
\sum_{j' \in M} \max(0, \left(\sum_{i\in N}\max(0, \hat{g}_{i,j} - \hat{g}_{i,j'})\right) - d)
\right| \end{dmath*}
\begin{dmath*}
\leq \sup_{v\in V} \sum_{j' \in M}
\left| 
\max(0, \left(\sum_{i\in N}\max(0, g_{i,j} - g_{i,j'})\right) - d) - 
\max(0, \left(\sum_{i\in N}\max(0, \hat{g}_{i,j} - \hat{g}_{i,j'})\right) - d)
\right| \end{dmath*}
\begin{dmath*}\leq \sup_{v\in V} \sum_{j' \in M}
\left| \sum_{i\in N}\max(0, g_{i,j} - g_{i,j'}) - 
\sum_{i\in N}\max(0, \hat{g}_{i,j} - \hat{g}_{i,j'})
\right| \end{dmath*}
\begin{dmath*}\leq \sup_{v\in V} \sum_{j' \in M} \sum_{i \in N}
\left| \max(0, g_{i,j} - g_{i,j'}) - 
\max(0, \hat{g}_{i,j} - \hat{g}_{i,j'})
\right| \end{dmath*}
\begin{dmath*}\leq \sup_{v\in V} \sum_{j' \in M} \sum_{i \in N}
\left| g_{i,j} - g_{i,j'} - 
\hat{g}_{i,j} + \hat{g}_{i,j'} \right| \end{dmath*}
\begin{dmath*}= \sup_{v\in V} \sum_{j' \in M} \sum_{i \in N}
\left| (g_{i,j} -
\hat{g}_{i,j}) - (g_{i,j'} -\hat{g}_{i,j'})  \right| 
\end{dmath*}
\begin{dmath*}\leq \sup_{v\in V} \sum_{j' \in M}\sum_{i \in N}
\left| g_{i,j} -
\hat{g}_{i,j}\right| + \sup_{v\in V}\sum_{j' \in M} \sum_{i \in N}\left| g_{i,j'} -\hat{g}_{i,j'}  \right|
\end{dmath*}
\begin{dmath*}= \sup_{v\in V} m \cdot \sum_{i \in N}
\left| g_{i,j} -
\hat{g}_{i,j}\right| + \sup_{v\in V}\sum_{j' \in M} \sum_{i \in N}\left| g_{i,j'} -\hat{g}_{i,j'}  \right|
\end{dmath*}
\begin{dmath*}\leq \sup_{v\in V} m \cdot \sum_{j \in M}\sum_{i \in N}
\left| g_{i,j} -
\hat{g}_{i,j}\right| + \sup_{v\in V}\sum_{j' \in M} \sum_{i \in N}\left| g_{i,j'} -\hat{g}_{i,j'}  \right|
\leq \frac{(m+1)\epsilon}{2m} \leq \epsilon\end{dmath*}
\end{dgroup*}

where the second-to-last inequality follows due to the definition of the cover to which $\hat{g}_{i,j}, \hat{g}_{i,j'}$ belong.

Therefore, $\mathcal{N}_{\infty}(\unf_j \circ \mathcal{G}, \epsilon) \leq \mathcal{N}_{\infty}(\mathcal{G}, \frac{\epsilon}{2m})$, which implies $\mathcal{N}_{\infty}(\unf \circ \mathcal{G}, \epsilon) \leq \mathcal{N}_{\infty}(\mathcal{G}, \frac{\epsilon}{2m^2})$

\subsubsection{Step 3. Bounding $\mathcal{N}(\overline{\unf} \circ \mathcal{G},\epsilon) \leq \mathcal{N}(\unf \circ \mathcal{G},\frac{\epsilon}{m})$}

We take $g_j, \hat{g}_j$ from a cover of $\unf \circ \mathcal{G}$. The $\ell_\infty$ distance between $\overline{\unf} \circ g$ and $\overline{\unf} \circ \hat{g}$ is:
\begin{dmath*}
\sup_{v\in V} \left|
\overline{\unf} \circ g - \overline{\unf} \circ \hat{g}
\right| 
\leq \sum_{j \in M} \sup_{v \in V} \left| \unf_j \circ g - \unf_j \circ \hat{g} \right| \leq 
\sum_{j\in M}\frac{\epsilon}{m}
= \epsilon,
\end{dmath*}

Therefore, $\mathcal{N}_{\infty}(\overline{\unf} \circ \mathcal{G}, \epsilon) \leq \mathcal{N}_{\infty}(\unf \circ \mathcal{G}, \frac{\epsilon}{m}).$

Combining these inequalities, we get $\mathcal{N}_{\infty}(\overline{\unf} \circ \mathcal{G}, \epsilon) \leq \mathcal{N}_{\infty}(\mathcal{M}, \frac{\epsilon}{2m^3})$ as desired.

\subsubsection{Applying bounds}

For convenience denote $\mathcal{F} = \overline{\unf} \circ \mathcal{G}$, with $\hat{\mathcal{F}}$ as its cover. Denote by $\hat{f}_f \in \hat{\mathcal{F}}$ the closest covering point to some $f \in \mathcal{F}$.

We wish to bound the empirical Rademacher complexity $\hat{\mathcal{R}}_L(\mathcal{F})$, which is 
\begingroup
\allowdisplaybreaks
\begin{align*}
     & \frac{1}{L} \mathbb{E}_\sigma \left[ \sup_{f \in \mathcal{F}} \sum_{\ell = 1}^L \sigma_\ell f(v^\ell) \right] \\
     &= \frac{1}{L} \mathbb{E}_\sigma \left[ \sup_{f \in \mathcal{F}} \sum_{\ell = 1}^L \sigma_\ell \left(\hat{f}_f(v^\ell) +  f(v^\ell) - \hat{f}_f(v^\ell) \right)\right] \\
     &\leq \frac{1}{L} \mathbb{E}_\sigma \left[ \sup_{f \in \mathcal{F}} \sum_{\ell = 1}^L \sigma_\ell \hat{f}_f(v^\ell)\right] + \frac{1}{L}\mathbb{E}_\sigma \left[ \sup_{f \in \mathcal{F}} \sum_{\ell = 1}^L  \left(f(v^\ell) - \hat{f}_f(v^\ell)\right) \right]  \\
      &\leq \frac{1}{L} \mathbb{E}_\sigma \left[ \sup_{f \in \mathcal{F}} \sum_{\ell = 1}^L \sigma_\ell \hat{f}_f(v^\ell)\right] + \frac{1}{L}\mathbb{E}_\sigma \left[  \sum_{\ell = 1}^L  \sup_{f \in \mathcal{F}} \left(f(v^\ell) - \hat{f}_f(v^\ell)\right) \right] \\
      &\leq \frac{1}{L} \mathbb{E}_\sigma \left[ \sup_{f \in \mathcal{F}} \sum_{\ell = 1}^L \sigma_\ell \hat{f}_f(v^\ell)\right] + \epsilon \\
        &= \frac{1}{L} \mathbb{E}_\sigma \left[ \sup_{\hat{f} \in \mathcal{\hat{f}}} \sum_{\ell = 1}^L \sigma_\ell \hat{f}(v^\ell)\right] + \epsilon \\
    &\leq \frac{1}{L} \max_{\hat{f} \in \hat{\mathcal{F}}} \sqrt{\sum_{\ell=1}^L \hat{f}(v^\ell)^2} \sqrt{2 \log \mathcal{N}_{\infty}(\mathcal{F},\epsilon)} + \epsilon \text{ (by Massart's lemma)}\\
    &\leq \frac{1}{L} \max_{f \in \mathcal{F}} \sqrt{\sum_{\ell=1}^L \left(f(v^\ell) + \epsilon\right)^2} \sqrt{2 \log \mathcal{N}_{\infty}(\mathcal{F},\epsilon)} + \epsilon \\
    &\leq \frac{1}{L} \sqrt{\sum_{\ell=1}^L \left(nm^2 + \epsilon\right)^2} \sqrt{2 \log \mathcal{N}_{\infty}(\mathcal{F},\epsilon)} + \epsilon \\
    &\leq \frac{1}{L} \sqrt{\sum_{\ell=1}^L \left(nm^2 + \epsilon\right)^2} \sqrt{2 \log \mathcal{N}_{\infty}(\mathcal{F},\epsilon)} + \epsilon \\
    &=\left(nm^2 + \epsilon\right) \sqrt{\frac{2 \log \mathcal{N}_{\infty}(\mathcal{F},\epsilon)}{L}} + \epsilon
\end{align*}

Given $\mathcal{N}_\infty(\mathcal{F}, \epsilon) \leq \mathcal{N}_\infty(\mathcal{M}, \frac{\epsilon}{2m})$, we can then apply the lemma of \cite{shalev2014understanding} as in \cite{dutting2019optimal} to say that with probability $1 - \delta$, for a distribution-independent constant $C$:
\begin{align*}
    \mathbb{E}_{v}\left[f(v)\right] &\leq \frac{1}{L} \sum_{\ell=1}^L f(v^\ell) + 2 \hat{\mathcal{R}}_L(\mathcal{F}) + 4C\sqrt{\frac{2\log(4/\delta)}{L}} \\
                &\leq \frac{1}{L} \sum_{\ell=1}^L f(v^\ell) + \\
                &2 \inf_{\epsilon > 0}\left((nm^2 + \epsilon)\sqrt{\frac{2 \log \mathcal{N}_\infty(\mathcal{M}, \frac{\epsilon}{2m^3})}{L}} + \epsilon\right)\\
                &+ 4C\sqrt{\frac{2\log(4/\delta)}{L}}
\end{align*}
\endgroup
\end{proof}

\subsection{Additional Results}

The following tables show the mean and standard deviation of regret and unfairness for all configurations of Setting C (Table \ref{tab:cregretunfair}), D (Table \ref{tab:dregretunfair}), E (Table \ref{tab:eregretunfair}), and F (Table \ref{tab:fregretunfair}). We achieve low values for all of these quantities, which is desirable. A non-vanishing value as the network is trained implies that the auction is empirically non-strategyproof (if regret cannot be minimized) or unfair (if unfairness cannot be minimized).

\begin{table*}
    \centering
\begin{tabular}{|c|c|c|c|c|c|c|}
\hline
\multicolumn{7}{|c|}{Sweep Regret, Mean (StDev)} \\ \hline
n x m & $\ell$ & \multicolumn{5}{|c|}{D}          \\ \hline
-     & - & 1.00 & 0.75 & 0.50 & 0.25 & 0.00 \\
\hline
1 x 2 & 2 & 0.000 (0.000) & 0.000 (0.001) & 0.000 (0.000) & 0.000 (0.000) & 0.001 (0.001) \\
1 x 3 & 2 & 0.001 (0.001) & 0.001 (0.001) & 0.000 (0.000) & 0.001 (0.001) & 0.001 (0.000) \\
1 x 4 & 2 & 0.001 (0.001) & 0.000 (0.001) & 0.001 (0.001) & 0.001 (0.001) & 0.001 (0.001) \\
1 x 5 & 2 & 0.001 (0.002) & 0.001 (0.001) & 0.001 (0.001) & 0.001 (0.001) & 0.001 (0.000) \\
1 x 6 & 2 & 0.001 (0.001) & 0.001 (0.001) & 0.001 (0.001) & 0.001 (0.001) & 0.001 (0.000) \\
\hline
2 x 2 & 2 & 0.001 (0.001) & 0.001 (0.001) & 0.001 (0.001) & 0.001 (0.001) & 0.002 (0.002) \\
2 x 3 & 2 & 0.002 (0.003) & 0.001 (0.002) & 0.002 (0.003) & 0.001 (0.001) & 0.001 (0.001) \\
2 x 4 & 2 & 0.002 (0.003) & 0.002 (0.003) & 0.001 (0.002) & 0.001 (0.002) & 0.002 (0.002) \\
2 x 5 & 3 & 0.003 (0.004) & 0.002 (0.003) & 0.004 (0.006) & 0.001 (0.002) & 0.001 (0.001) \\
2 x 6 & 3 & 0.003 (0.004) & 0.003 (0.004) & 0.002 (0.003) & 0.003 (0.005) & 0.002 (0.003) \\
\hline
3 x 2 & 2 & 0.002 (0.002) & 0.002 (0.002) & 0.001 (0.001) & 0.001 (0.002) & 0.001 (0.001) \\
3 x 3 & 2 & 0.003 (0.003) & 0.003 (0.003) & 0.001 (0.002) & 0.002 (0.002) & 0.001 (0.001) \\
3 x 4 & 3 & 0.006 (0.006) & 0.004 (0.005) & 0.002 (0.003) & 0.002 (0.002) & 0.002 (0.002) \\
3 x 5 & 3 & 0.006 (0.007) & 0.005 (0.006) & 0.002 (0.005) & 0.002 (0.003) & 0.003 (0.003) \\
3 x 6 & 3 & 0.009 (0.011) & 0.006 (0.007) & 0.004 (0.005) & 0.004 (0.005) & 0.002 (0.002) \\
\hline
4 x 2 & 2 & 0.003 (0.002) & 0.003 (0.002) & 0.003 (0.002) & 0.002 (0.002) & 0.001 (0.001) \\
4 x 3 & 3 & 0.004 (0.003) & 0.004 (0.002) & 0.003 (0.003) & 0.002 (0.002) & 0.001 (0.001) \\
4 x 4 & 3 & 0.006 (0.004) & 0.006 (0.004) & 0.002 (0.002) & 0.003 (0.003) & 0.002 (0.003) \\
4 x 5 & 4 & 0.007 (0.005) & 0.007 (0.004) & 0.004 (0.002) & 0.003 (0.004) & 0.003 (0.003) \\
4 x 6 & 4 & 0.010 (0.076) & 0.007 (0.006) & 0.006 (0.006) & 0.006 (0.007) & 0.003 (0.003) \\
\hline
5 x 2 & 3 & 0.003 (0.003) & 0.005 (0.004) & 0.005 (0.003) & 0.004 (0.003) & 0.005 (0.006) \\
5 x 3 & 3 & 0.006 (0.003) & 0.010 (0.005) & 0.008 (0.004) & 0.006 (0.004) & 0.003 (0.003) \\
5 x 4 & 4 & 0.007 (0.004) & 0.011 (0.005) & 0.006 (0.008) & 0.003 (0.004) & 0.003 (0.004) \\
5 x 5 & 4 & 0.003 (0.045) & 0.008 (0.007) & 0.007 (0.008) & 0.005 (0.010) & 0.002 (0.003) \\
5 x 6 & 5 & 0.190 (0.377) & 0.005 (0.008) & 0.005 (0.015) & 0.005 (0.010) & 0.003 (0.004) \\
\hline

\multicolumn{7}{|c|}{Sweep Unfairness, Mean (StDev)} \\ \hline
n x m & $\ell$ & \multicolumn{5}{|c|}{D}          \\ \hline
-     & - & 1.00 & 0.75 & 0.50 & 0.25 & 0.00 \\
\hline
1 x 2 & 2 & 0.000 (0.000) & 0.000 (0.000) & 0.000 (0.000) & 0.000 (0.000) & 0.002 (0.002) \\
1 x 3 & 2 & 0.000 (0.000) & 0.000 (0.001) & 0.000 (0.004) & 0.000 (0.002) & 0.005 (0.010) \\
1 x 4 & 2 & 0.000 (0.000) & 0.000 (0.000) & 0.000 (0.004) & 0.000 (0.005) & 0.009 (0.018) \\
1 x 5 & 2 & 0.000 (0.000) & 0.000 (0.009) & 0.000 (0.008) & 0.000 (0.007) & 0.006 (0.015) \\
1 x 6 & 2 & 0.000 (0.000) & 0.000 (0.006) & 0.000 (0.003) & 0.000 (0.009) & 0.009 (0.021) \\
\hline
2 x 2 & 2 & 0.000 (0.000) & 0.000 (0.003) & 0.000 (0.003) & 0.000 (0.001) & 0.006 (0.010) \\
2 x 3 & 2 & 0.000 (0.000) & 0.000 (0.003) & 0.000 (0.003) & 0.000 (0.000) & 0.005 (0.009) \\
2 x 4 & 2 & 0.000 (0.000) & 0.000 (0.006) & 0.000 (0.001) & 0.000 (0.000) & 0.015 (0.023) \\
2 x 5 & 3 & 0.000 (0.000) & 0.000 (0.007) & 0.000 (0.000) & 0.000 (0.000) & 0.008 (0.011) \\
2 x 6 & 3 & 0.000 (0.000) & 0.003 (0.043) & 0.001 (0.017) & 0.000 (0.000) & 0.031 (0.047) \\
\hline
3 x 2 & 2 & 0.000 (0.000) & 0.001 (0.006) & 0.001 (0.007) & 0.000 (0.006) & 0.004 (0.008) \\
3 x 3 & 2 & 0.000 (0.000) & 0.000 (0.007) & 0.000 (0.001) & 0.000 (0.000) & 0.008 (0.013) \\
3 x 4 & 3 & 0.000 (0.000) & 0.000 (0.008) & 0.000 (0.002) & 0.000 (0.000) & 0.012 (0.016) \\
3 x 5 & 3 & 0.000 (0.000) & 0.001 (0.019) & 0.002 (0.031) & 0.000 (0.000) & 0.023 (0.034) \\
3 x 6 & 3 & 0.000 (0.000) & 0.001 (0.011) & 0.001 (0.015) & 0.001 (0.023) & 0.017 (0.024) \\
\hline
4 x 2 & 2 & 0.000 (0.000) & 0.001 (0.003) & 0.000 (0.004) & 0.000 (0.001) & 0.004 (0.006) \\
4 x 3 & 3 & 0.000 (0.000) & 0.001 (0.014) & 0.001 (0.013) & 0.000 (0.001) & 0.009 (0.012) \\
4 x 4 & 3 & 0.000 (0.000) & 0.001 (0.011) & 0.000 (0.000) & 0.000 (0.000) & 0.016 (0.020) \\
4 x 5 & 4 & 0.000 (0.000) & 0.000 (0.006) & 0.000 (0.000) & 0.000 (0.000) & 0.022 (0.033) \\
4 x 6 & 4 & 0.000 (0.000) & 0.001 (0.014) & 0.000 (0.010) & 0.002 (0.022) & 0.020 (0.026) \\
\hline
5 x 2 & 3 & 0.000 (0.000) & 0.000 (0.001) & 0.000 (0.003) & 0.000 (0.002) & 0.008 (0.013) \\
5 x 3 & 3 & 0.000 (0.000) & 0.008 (0.022) & 0.010 (0.032) & 0.003 (0.024) & 0.018 (0.026) \\
5 x 4 & 4 & 0.000 (0.000) & 0.001 (0.006) & 0.000 (0.000) & 0.000 (0.000) & 0.008 (0.013) \\
5 x 5 & 4 & 0.000 (0.000) & 0.000 (0.003) & 0.000 (0.002) & 0.000 (0.003) & 0.007 (0.010) \\
5 x 6 & 5 & 0.000 (0.000) & 0.000 (0.000) & 0.000 (0.000) & 0.000 (0.000) & 0.011 (0.016) \\
\hline
\end{tabular}
\caption{Regret and unfairness for statistics for Setting C. Note that for the 5 x 6 setting, the network converged to a suboptimal solution with both high payment and high regret.}
\label{tab:cregretunfair}
\end{table*}

\begin{table*}[t]
    \centering
    \begin{tabular}{|c|c|c|c|c|c|}                  \hline
    \multicolumn{6}{|c|}{Setting D Regret: Mean (StDev)} \\ \hline
    b    & \multicolumn{5}{|c|}{d}               \\ \hline
    -    & 1.00  & 0.75  &  0.50 & 0.25  & 0.00  \\ \hline
    0.00 & 0.004 (0.006) & 0.004 (0.004) & 0.002 (0.003) & 0.002 (0.002) & 0.001 (0.002) \\ \hline
    0.25 & 0.004 (0.005) & 0.004 (0.004) & 0.002 (0.003) & 0.003 (0.004) & 0.001 (0.001) \\ \hline
    0.50 & 0.005 (0.005) & 0.003 (0.004) & 0.002 (0.003) & 0.002 (0.002) & 0.001 (0.002) \\ \hline
    0.75 & 0.006 (0.008) & 0.003 (0.004) & 0.003 (0.004) & 0.002 (0.002) & 0.002 (0.002) \\ \hline
    1.00 & 0.006 (0.006) & 0.003 (0.004) & 0.004 (0.007) & 0.001 (0.002) & 0.001 (0.002) \\ \hline
    
    \multicolumn{6}{|c|}{Setting D Unfairness: Mean (StDev)} \\ \hline
    -    & 1.00  & 0.75  &  0.50 & 0.25  & 0.00  \\ \hline
    0.00 & 0.000 (0.000) & 0.000 (0.006) & 0.000 (0.003) & 0.000 (0.000) & 0.002 (0.003) \\ \hline
    0.25 & 0.000 (0.000) & 0.000 (0.004) & 0.000 (0.000) & 0.000 (0.000) & 0.001 (0.002) \\ \hline
    0.50 & 0.000 (0.000) & 0.000 (0.005) & 0.000 (0.000) & 0.000 (0.000) & 0.001 (0.002) \\ \hline
    0.75 & 0.000 (0.000) & 0.000 (0.005) & 0.000 (0.000) & 0.000 (0.000) & 0.001 (0.003) \\ \hline
    1.00 & 0.000 (0.000) & 0.035 (0.001) & 0.000 (0.000) & 0.000 (0.000) & 0.001 (0.002) \\ \hline
    \end{tabular}
    \caption{Regret and unfairness statistics for Setting D.}
    \label{tab:dregretunfair}
\end{table*}

\begin{table*}[t]
    \centering
    \begin{tabular}{|c|c|c|c|c|c|}                  \hline
    \multicolumn{6}{|c|}{Setting E Regret: Mean (StDev)} \\ \hline
    b    & \multicolumn{5}{|c|}{d}               \\ \hline
    -    & 1.00  & 0.75  &  0.50 & 0.25  & 0.00  \\ \hline
    0.00 & 0.004 (0.006) & 0.005 (0.006) & 0.003 (0.003) & 0.003 (0.003) & 0.002 (0.003) \\ \hline
    0.25 & 0.004 (0.005) & 0.003 (0.003) & 0.002 (0.003) & 0.003 (0.003) & 0.003 (0.003) \\ \hline
    0.50 & 0.005 (0.005) & 0.005 (0.006) & 0.002 (0.003) & 0.003 (0.004) & 0.003 (0.005) \\ \hline
    0.75 & 0.006 (0.008) & 0.006 (0.007) & 0.002 (0.003) & 0.003 (0.004) & 0.002 (0.004) \\ \hline
    1.00 & 0.006 (0.006) & 0.006 (0.006) & 0.003 (0.004) & 0.003 (0.003) & 0.003 (0.004) \\ \hline
    
    \multicolumn{6}{|c|}{Setting E Unfairness: Mean (StDev)} \\ \hline
    -    & 1.00  & 0.75  &  0.50 & 0.25  & 0.00  \\ \hline
    0.00 & 0.000 (0.000) & 0.000 (0.005) & 0.000 (0.000) & 0.000 (0.000) & 0.002 (0.003) \\ \hline
    0.25 & 0.000 (0.000) & 0.000 (0.006) & 0.000 (0.000) & 0.000 (0.000) & 0.001 (0.003) \\ \hline
    0.50 & 0.000 (0.000) & 0.000 (0.005) & 0.000 (0.000) & 0.000 (0.000) & 0.001 (0.003) \\ \hline
    0.75 & 0.000 (0.000) & 0.000 (0.003) & 0.000 (0.000) & 0.000 (0.000) & 0.002 (0.005) \\ \hline
    1.00 & 0.000 (0.000) & 0.000 (0.000) & 0.000 (0.000) & 0.000 (0.000) & 0.002 (0.005) \\ \hline
    \end{tabular}
    \caption{Regret and unfairness statistics for Setting E. }
    \label{tab:eregretunfair}
\end{table*}

\begin{table*}[t]
    \centering
    \begin{tabular}{|c|c|c|c|c|c|}                  \hline
    \multicolumn{6}{|c|}{Setting F Regret: Mean (StDev)} \\ \hline
    b    & \multicolumn{5}{|c|}{d}               \\ \hline
    -    & 1.00  & 0.75  &  0.50 & 0.25  & 0.00  \\ \hline
    0.00 & 0.004 (0.006) & 0.004 (0.004) & 0.003 (0.004) & 0.003 (0.004) & 0.003 (0.003) \\ \hline
    0.25 & 0.004 (0.005) & 0.005 (0.005) & 0.004 (0.005) & 0.001 (0.002) & 0.002 (0.003) \\ \hline
    0.50 & 0.005 (0.005) & 0.005 (0.005) & 0.003 (0.004) & 0.004 (0.005) & 0.004 (0.004) \\ \hline
    0.75 & 0.006 (0.008) & 0.004 (0.005) & 0.003 (0.004) & 0.002 (0.002) & 0.003 (0.003) \\ \hline
    1.00 & 0.006 (0.006) & 0.004 (0.005) & 0.003 (0.004) & 0.002 (0.002) & 0.003 (0.003) \\ \hline
    
    \multicolumn{6}{|c|}{Setting F Unfairness: Mean (StDev)} \\ \hline
    -    & 1.00  & 0.75  &  0.50 & 0.25  & 0.00  \\ \hline
    0.00 & 0.000 (0.000) & 0.000 (0.005) & 0.000 (0.003) & 0.000 (0.000) & 0.003 (0.006) \\ \hline
    0.25 & 0.000 (0.000) & 0.000 (0.007) & 0.000 (0.002) & 0.000 (0.000) & 0.004 (0.007) \\ \hline
    0.50 & 0.000 (0.000) & 0.000 (0.007) & 0.000 (0.002) & 0.000 (0.000) & 0.007 (0.010) \\ \hline
    0.75 & 0.000 (0.000) & 0.000 (0.004) & 0.000 (0.004) & 0.000 (0.000) & 0.004 (0.006) \\ \hline
    1.00 & 0.000 (0.000) & 0.000 (0.004) & 0.000 (0.004) & 0.000 (0.000) & 0.004 (0.006) \\ \hline
    \end{tabular}
    \caption{Regret and unfairness statistics for Setting F. With the additional fairness category, unfairness becomes slightly harder to minimize.}
    \label{tab:fregretunfair}
\end{table*}

\end{document}